\theoremstyle{plain}
\newtheorem{theorem}{Theorem}[section]
\newtheorem{corollary}[theorem]{Corollary}
\newtheorem{lemma}[theorem]{Lemma}
\newtheorem{proposition}[theorem]{Proposition}
\theoremstyle{definition}
\theoremstyle{remark}
\newcommand{\ii}{\mathrm{i}}
\newcommand{\e}{\mathrm{e}}
\newcommand{\tr}{\operatorname{tr}}
\newcommand{\R}{\mathbb{R}}
\newcommand{\Dom}[0]{{\mathrm{Dom}\,}}
\newcommand{\hilb}{\mathcal{H}}
\definecolor{cblue}{rgb}{0.16, 0.32, 0.75}
\definecolor{cred}{rgb}{0.7, 0.11, 0.11}
\title{\textbf{Positive Hamiltonians cannot give \\exponential decay of positive observables}}
\author[$\,1,3$]{\normalsize Paolo Facchi\hspace{3pt}\orcidlink{0000-0001-9152-6515}}
\author[$\,2,3,\star$]{\normalsize Davide Lonigro\hspace{3pt}\orcidlink{0000-0002-0792-8122}}
\affil[$1$]{\small Dipartimento di Fisica, Universit\`a di Bari, I-70126 Bari, Italy}
\affil[$2$]{\small Dipartimento di Matematica, Universit\`a di Bari, I-70125 Bari, Italy}
\affil[$3$]{\small Istituto Nazionale di Fisica Nucleare, Sezione di Bari, I-70126 Bari, Italy}
\affil[$\star$]{\small \href{mailto:davide.lonigro@ba.infn.it}{\texttt{davide.lonigro@ba.infn.it}}}
\date{\normalsize\today}
\begin{document}
	
\vspace{-1.5cm}
\maketitle\vspace{-0.5cm}
	
\begin{abstract}
The survival probability of a quantum system with a finite ground energy is known to decay subexponentially at large times. Here we show that, under the same assumption, the average value of any quantum observable, whenever well-defined, cannot converge exponentially to an extremal value of the spectrum of the observable. Large-time deviations from the exponential decay are therefore a general feature of quantum systems. As a simple application of these results, we show that, when considering an open quantum system whose dynamics is generated by a Hamiltonian with a finite ground energy, a large-time exponential decay of populations is forbidden, whereas coherences may still decay exponentially.
\end{abstract}

\section{Introduction}
The description of decay phenomena has been a central topic in quantum mechanics since its inception. A fundamental role is played by exponential decay: unstable systems are often experimentally shown to follow an exponentially decaying law at large times. Quantum mechanical models exhibiting such behaviors were first provided in~\cite{Gurney,Gamov,Weisskopf}.

There is, however, a huge limitation in the quantum mechanical description of exponential decay: whenever the Hamiltonian generating the dynamics of the system is bounded from below, i.e.\ admits a finite ground energy, the survival probability of any pure state of the system cannot decay exponentially at large times. Its evolution must follow a slower decay law, for instance a power law. This is the content of a highly influential no-go argument first obtained by Khalfin~\cite{Khalfin57,Khalfin58} as a direct consequence of the celebrated Paley-Wiener theorem in Fourier analysis~\cite{PaleyWiener}: exponential decay at large times requires, as a necessary conditions, that the Hamiltonian has a doubly unbounded spectrum, which is regarded as unphysical in quantum theory. Khalfin's argument can be summarized as follows: given a vector state $\psi$ in a Hilbert space $\hilb$, with $\|\psi\|=1$, and a Hamiltonian $H$ bounded from below, the survival amplitude of $\psi$ under the evolution generated by~$H$,
\begin{equation}\label{eq:surv}
	t\in\mathbb{R}\mapsto \braket{\psi|\e^{-\ii tH}\psi}\in\mathbb{C},
\end{equation}
must satisfy the following inequality:
\begin{equation}\label{eq:paley}
	\int_{-\infty}^\infty\frac{\log\left|\braket{\psi|\e^{-\ii tH}\psi}\right|}{1+t^2}\,\mathrm{d} t>-\infty.
\end{equation}
Consequently, the survival probability of any state cannot behave exponentially at large times: if it decays to zero, its decay should be slower than any exponential function $\e^{-\gamma t}$.
 Mathematically, this follows from the fact that, if the spectrum of $H$ is bounded from below, the function in Eq.~(\ref{eq:surv}) can be extended to a bounded analytic function in the lower complex half-plane, continuous up to the real axis, which must necessarily satisfy Eq.~(\ref{eq:paley}).
Further details can be found, for example, in~\cite{Peres,Newton,Fonda,Gaemers,Pascazio96,Exner,Giacosa14} and references therein. 

Since most Hamiltonians in quantum physics are bounded from below, exponential decay at large times is thus argued to be \textit{unphysical}, even though deviations from the exponential law are usually hard to detect experimentally: the first observations of sub-exponential behavior at large times are relatively recent~\cite{Crespi}.
However, the survival probability is not the only physically interesting decay law associated with a quantum system: we are often interested in the time behavior of the average value of quantum observables. Khalfin's argument does not extend directly to a generic quantum average. Indeed, the evolution law of the average of some observables \textit{can} be exponential. This was first pointed out in~\cite{Facchi17}: the reduced dynamics of a bipartite system can be Markovian and some local observables can decay with a pure exponential law. It is thus natural to ask whether exponential decay of quantum averages can still be ruled out for some class of observables. 

In this paper we will show that Khalfin's argument can be greatly generalized: positive Hamiltonians \textit{cannot} give purely exponential decay of the average value of a \textit{positive} observable. The result holds for bounded observables as well as unbounded ones, provided that the average value is well-defined at every time. 

The time evolution of the average value on $\psi$ of any positive observable $A$, whenever well-defined, is given by
\begin{equation}
	\braket{A}_\psi(t)= \braket{\e^{-\ii tH}\psi| A\, \e^{-\ii tH}\psi} = \|A^{1/2}\e^{-\ii tH}\psi\|^2.
\end{equation}
We will show that the function
\begin{equation}
	t\in\mathbb{R}\mapsto A^{1/2}\e^{-\ii tH}\psi\in\hilb
\end{equation}
must satisfy an inequality analogous to Eq.~(\ref{eq:paley}):
\begin{equation}\label{eq:paley2}
\int_{-\infty}^\infty\frac{\log\|A^{1/2}\e^{-\ii tH}\psi\|}{1+t^2}\,\mathrm{d} t>-\infty,
\end{equation}
ruling out exponential decay. 
Analogous estimates can be obtained, more generally, in the case in which the system is initially assumed to be prepared in a (possibly) mixed state, represented by a density operator $\varrho\in\mathcal{B}(\hilb)$---that is, a positive trace-class operator with unit trace.

More generally, this means that, if the average value of a quantum observable $A$ converges to an \textit{extremal} value of its spectrum, then the convergence cannot be exponential: an exponential law to a value $a$ can only be obtained as the result of a \textit{compensation} between the contributions of the positive and negative parts of the spectrum of $A-a$. There is nothing inherently ``special" about the survival probability of pure states, which corresponds to choosing $A=\ket{\psi}\!\bra{\psi}$: deviations from the exponential law are a much more general feature of quantum mechanical systems. 

This paper is organized as follows. In Section~\ref{sec:1} we show that the evolution law of a state $\psi$ induced by a Hamiltonian $H$ bounded from below admits a vector-valued analytic extension to the lower half-plane; in Section~\ref{sec:2} we exploit this formalism to generalize Khalfin's argument to $\braket{A}_\psi(t)$; finally, in Section~\ref{sec:3} we discuss some applications to open quantum system theory, and in Section~\ref{sec:4} we summarize our results. The appendix is devoted to the proof of a logarithmic inequality for vector-valued analytic functions, which is crucial for proving our results. 

\section{Analytic continuation of the evolution group}\label{sec:1}
Let $H$ be a self-adjoint operator on a separable Hilbert space $\hilb$, with scalar product $\braket{\cdot |\cdot}$ and norm $\|\cdot\|$. By Stone's theorem, $H$ is uniquely associated with a time-homogeneous \textit{unitary propagator}, i.e.~a strongly continuous unitary group $\{U(t)\}_{t\in\mathbb{R}}$ on $\hilb$, defined via
\begin{equation}
\e^{-\ii tH}=\int_{\mathbb{R}} \e^{-\ii t\lambda}\,\mathrm{d}P_H(\lambda),
\end{equation}
with $P_H$ being the projection-valued measure associated with $H$. By construction, for all $\psi\in\hilb$, the vector-valued function
\begin{equation}\label{eq:evol}
t\in\mathbb{R}\mapsto\e^{-\ii tH}\psi\in\hilb
\end{equation}
is continuous. Physically, for a quantum system, given any state $\psi\in\hilb$, the map in~(\ref{eq:evol}) represents the \textit{evolution} of the state of a quantum system which, at $t=0$, is in the state $\psi$. We remark that, for all $\alpha\geq0$,
\begin{equation}\label{eq:moment}
\e^{-\ii tH}\Dom H^\alpha=\Dom H^\alpha,
\end{equation}
$\Dom H^\alpha$ being the domain of $H^\alpha$, and that the evolution map~(\ref{eq:evol}), for a generic $\psi\in\hilb$, is continuous but \textit{not} necessarily differentiable. Differentiability holds if and only if $\psi\in\Dom H$, with
\begin{equation}
\ii\frac{\mathrm{d}}{\mathrm{d}t}\e^{-\ii tH}\psi=H\e^{-\ii tH}\psi,
\end{equation}
and more generally, whenever $\psi\in\Dom H^k$ for $k\in\mathbb{N}$,
\begin{equation}
\left(\ii\frac{\mathrm{d}}{\mathrm{d}t}\right)^k\e^{-\ii tH}\psi=H^k\e^{-\ii tH}\psi = \e^{-\ii tH} H^k\psi.
\end{equation}
The regularity of the map in~(\ref{eq:evol}) is thus directly linked with the number of energy moments that are finite in the state $\psi$, namely $\|H^k \psi\|<\infty$.

From now on we will suppose $H$ to be \textit{bounded from below}. Without loss of generality, we will take $H\geq0$. The spectrum of $H$ is thus contained in $[0,\infty)$ and, for all $t\in\mathbb{R}$, we can write
\begin{equation}
\e^{-\ii tH}=\int_0^\infty\e^{-\ii\lambda t}\,\mathrm{d}P_H(\lambda).
\end{equation}
As a crucial consequence, the following property holds:
\begin{proposition}\label{prop:evolana}
	Let $H\geq 0$. For all $\tau\in\mathbb{C}^-$, the open lower half-plane of complex numbers, the operator
	\begin{equation}\label{eq:eitau}
	\e^{-\ii\tau H}=\int_0^\infty\e^{-\ii\lambda\tau}\,\mathrm{d}P_H(\lambda)
	\end{equation}
	is well-defined and bounded. Besides, for all $\psi\in\hilb$, we have
	\begin{equation}
	\e^{-\ii\tau H}\psi\in\bigcap_{\alpha\geq0}\Dom H^{\alpha} ,\label{eq:regular}
	\end{equation}
	and the map
	\begin{equation}\label{eq:complexder}
	\tau\in\mathbb{C}^-\cup\mathbb{R}\mapsto\e^{-\ii\tau H}\psi\in\hilb
	\end{equation}
	is a bounded vector-valued function, analytic in $\mathbb{C}^-$ and continuous up to the real axis, with complex derivative
	\begin{equation}\label{eq:complexder1}
	\left(\ii\frac{\mathrm{d}}{\mathrm{d}\tau}\right)^k\e^{-\ii\tau H} \psi =H^k\e^{-\ii\tau H}\psi ,
	\end{equation}
	for all $k\in\mathbb{N}$ and all $\tau\in\mathbb{C}^-$.
\end{proposition}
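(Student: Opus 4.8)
The plan is to read the whole statement off the spectral theorem for $H$, using nothing beyond the elementary fact that on the spectrum $[0,\infty)$ of $H$ an exponential damping beats any polynomial growth. Write $\tau=t-\ii s$ with $t=\Re\tau$ and $s=-\Im\tau$, so that $\tau\in\mathbb{C}^-$ corresponds to $s>0$ and $\tau\in\R$ to $s=0$; then for $\lambda\geq0$ one has $|\e^{-\ii\lambda\tau}|=\e^{-\lambda s}$, which is $\leq1$ whenever $\Im\tau\leq0$, and, crucially, for every $\beta\geq0$ and $s>0$,
\begin{equation}
M_\beta(s):=\sup_{\lambda\geq0}\lambda^\beta\e^{-\lambda s}=\Bigl(\frac{\beta}{\e s}\Bigr)^{\beta}<\infty .
\end{equation}
From $|\e^{-\ii\lambda\tau}|\le1$ it follows at once that $\lambda\mapsto\e^{-\ii\lambda\tau}$ is bounded on $[0,\infty)$ for every $\tau\in\mathbb{C}^-\cup\R$, so the bounded functional calculus makes the right-hand side of~(\ref{eq:eitau}) a well-defined bounded operator with $\|\e^{-\ii\tau H}\|\le1$, and in particular $\|\e^{-\ii\tau H}\psi\|\le\|\psi\|$ for all $\psi\in\hilb$; this already yields the asserted boundedness of the map~(\ref{eq:complexder}).

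For the regularity~(\ref{eq:regular}) I would use the spectral description of the domains: $\phi\in\Dom H^\alpha$ iff $\int_0^\infty\lambda^{2\alpha}\,\mathrm{d}\|P_H(\lambda)\phi\|^2<\infty$. Taking $\phi=\e^{-\ii\tau H}\psi$ with $\tau\in\mathbb{C}^-$ and using $\mathrm{d}\|P_H(\lambda)\phi\|^2=\e^{-2\lambda s}\,\mathrm{d}\|P_H(\lambda)\psi\|^2$, one gets
\begin{equation}
\int_0^\infty\lambda^{2\alpha}\,\mathrm{d}\|P_H(\lambda)\phi\|^2=\int_0^\infty\lambda^{2\alpha}\e^{-2\lambda s}\,\mathrm{d}\|P_H(\lambda)\psi\|^2\le M_{2\alpha}(2s)\,\|\psi\|^2<\infty
\end{equation}
for every $\alpha\ge0$, which is~(\ref{eq:regular}). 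The same computation shows that $H^k\e^{-\ii\tau H}=\int_0^\infty\lambda^k\e^{-\ii\lambda\tau}\,\mathrm{d}P_H(\lambda)$ is in fact a \emph{bounded} operator, of norm $\le M_k(s)$, for every $\tau\in\mathbb{C}^-$ and $k\in\mathbb{N}$ --- a fact that makes the differentiation below painless.

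Analyticity in $\mathbb{C}^-$ and the derivative formula~(\ref{eq:complexder1}) I would then obtain by differentiating under the spectral integral, the interchange being justified by $M_\beta<\infty$. Fixing $\tau\in\mathbb{C}^-$ and $k\in\mathbb{N}$, taking $h\in\C$ small enough that $\tau+h\in\mathbb{C}^-$, and using $|\e^{-\ii\lambda h}-1+\ii\lambda h|\le\tfrac12(\lambda|h|)^2$, one obtains
\begin{equation}
\Bigl\|\frac{H^k\e^{-\ii(\tau+h)H}\psi-H^k\e^{-\ii\tau H}\psi}{h}+\ii H^{k+1}\e^{-\ii\tau H}\psi\Bigr\|^2=\int_0^\infty\lambda^{2k}\Bigl|\frac{\e^{-\ii\lambda h}-1}{h}+\ii\lambda\Bigr|^2\e^{-2\lambda s}\,\mathrm{d}\|P_H(\lambda)\psi\|^2 ,
\end{equation}
whose integrand is dominated by $\tfrac14\lambda^{2k+4}|h|^2\e^{-2\lambda s}\le\tfrac14 M_{2k+4}(2s)|h|^2$, so the left-hand side is $\le\tfrac14 M_{2k+4}(2s)|h|^2\|\psi\|^2\to0$ as $h\to0$. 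Hence $\tau\mapsto H^k\e^{-\ii\tau H}\psi$ is norm-differentiable on $\mathbb{C}^-$ with derivative $-\ii H^{k+1}\e^{-\ii\tau H}\psi$; the case $k=0$ gives holomorphy of the map~(\ref{eq:complexder}), and iterating in $k$ gives~(\ref{eq:complexder1}) for all $k\in\mathbb{N}$. (Alternatively one could establish weak holomorphy --- $\tau\mapsto\braket{\phi|\e^{-\ii\tau H}\psi}=\int_0^\infty\e^{-\ii\lambda\tau}\,\mathrm{d}\braket{\phi|P_H(\lambda)\psi}$ is holomorphic, the complex measure $\braket{\phi|P_H(\cdot)\psi}$ having finite total variation $\le\|\phi\|\,\|\psi\|$ --- and then invoke the classical fact that a weakly holomorphic Banach-space-valued map is norm-holomorphic.)

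Finally, continuity of the map~(\ref{eq:complexder}) up to $\R$, where the damping $\e^{-\lambda s}$ degenerates and the moment bounds above are useless, comes from the crude uniform bound alone: for $\tau_n\to\tau$ in $\mathbb{C}^-\cup\R$,
\begin{equation}
\|\e^{-\ii\tau_n H}\psi-\e^{-\ii\tau H}\psi\|^2=\int_0^\infty\bigl|\e^{-\ii\lambda\tau_n}-\e^{-\ii\lambda\tau}\bigr|^2\,\mathrm{d}\|P_H(\lambda)\psi\|^2 ,
\end{equation}
and the integrand tends to $0$ pointwise while remaining bounded by the $\mathrm{d}\|P_H(\lambda)\psi\|^2$-integrable constant $4$, so dominated convergence finishes the proof. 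I expect the only genuine point of care to be keeping these two regimes apart: inside $\mathbb{C}^-$ one leans on $M_\beta(s)<\infty$, i.e.\ on the Laplace-type damping controlling all energy moments of $\psi$, in order to differentiate under the spectral integral, whereas near the real boundary one must fall back on $|\e^{-\ii\lambda\tau}|\le1$; everything else is dominated convergence.
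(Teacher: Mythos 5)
Your proposal follows essentially the same route as the paper: the paper's own proof is exactly this spectral-calculus computation (uniform bound $|\e^{-\ii\lambda\tau}|\le 1$ for boundedness, the moment estimate $\int_0^\infty\lambda^{2\alpha}\e^{-2\eta\lambda}\,\mathrm{d}\mu_\psi(\lambda)<\infty$ for Eq.~(\ref{eq:regular})), with the strong differentiability and boundary continuity left implicit, which you fill in via difference quotients and dominated convergence. One small inaccuracy in your differentiability step: the bound $|\e^{-\ii\lambda h}-1+\ii\lambda h|\le\tfrac12(\lambda|h|)^2$ is valid only when $\Re(-\ii\lambda h)=\lambda\Im h\le 0$; increments $h$ with $\Im h>0$ are allowed (one only needs $\Im(\tau+h)<0$), and for these the correct estimate carries an extra factor $\e^{\lambda\Im h}$, so your claimed dominating function $\tfrac14\lambda^{2k+4}|h|^2\e^{-2\lambda s}$ is not literally valid. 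The fix is one line: restrict to $|h|\le s/2$, use $|\e^{z}-1-z|\le\tfrac12|z|^2\e^{(\Re z)_+}$, and dominate the integrand by $\tfrac14\lambda^{2k+4}|h|^2\e^{-\lambda s}\le\tfrac14 M_{2k+4}(s)|h|^2$; alternatively, the weak-holomorphy argument you sketch in parentheses avoids the issue entirely and suffices, since weak and norm holomorphy coincide for Hilbert-space-valued maps. With either repair the proof is complete and matches the paper's.
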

\begin{proof}
	For all $\psi\in\hilb$, setting $\tau=t-\ii\eta$ with $\eta\geq0$, we have
	\begin{equation}\label{eq:bounded}
	\|\e^{-\ii\tau H}\psi\|^2=\int_0^\infty|\e^{-\ii\lambda\tau}|^2\,\mathrm{d}\mu_\psi(\lambda)=\int_0^\infty\e^{-2\eta\lambda}\,\mathrm{d}\mu_\psi(\lambda)\leq \|\psi\|^2<\infty,
	\end{equation}
	since the function $\lambda\in[0,\infty)\mapsto\e^{-2\lambda\eta}$ is bounded by 1, and $\mu_\psi(\cdot)=\|P_H(\cdot)\psi\|^2$ is a finite measure with $\mu_\psi(\R)=\|\psi\|^2$. Consequently, $\e^{-\ii\tau H}$ is a bounded operator. 
	
	Moreover, we also have, for all $\alpha\geq0$, and $\eta>0$,
	\begin{equation}
	\| H^\alpha\e^{-\ii\tau H}\psi\|^2=\int_0^\infty\lambda^{2\alpha}|\e^{-\ii\lambda\tau}|^2\,\mathrm{d}\mu_\psi(\lambda)=\int_0^\infty\lambda^{2\alpha}\e^{-2\eta\lambda}\,\mathrm{d}\mu_\psi(\lambda)<\infty,
	\end{equation}
	implying that $H^{\alpha}\e^{-\ii\tau H}$ is itself a bounded operator, or equivalently that, for all $\psi\in\hilb$, $\e^{-\ii\tau H}\psi\in\Dom H^{\alpha}$.
	The latter property also means that the function~(\ref{eq:complexder}) can be differentiated (in the strong sense) in $\mathbb{C}^{-}$, thus implying analyticity and, for all $k\in\mathbb{N}$, Eq.~(\ref{eq:complexder1}). Boundedness of the function $\tau\mapsto\e^{-\ii\tau H}\psi$ follows from Eq.~(\ref{eq:bounded}).
\end{proof}
Consequently, whenever the spectrum is bounded from below, the evolution map can be continuously extended to a function in lower complex half-plane which is analytic for \textit{every choice} of $\psi$. Hence, the choice of $\psi$ only affects the regularity property of its behavior \textit{at the boundary}, i.e.\ on the real line. In other words, the ``nice" properties of the analytic continuation of $\tau\mapsto\e^{-\ii\tau H}\psi$ follow because, when extended to the lower half-plane, the evolution group acquires an exponentially decaying term and hence ``regularizes" all states on which it acts.

As an immediate corollary of Proposition~\ref{prop:evolana}, given $\alpha\geq0$, for all $\psi\in\Dom H^{\alpha}$ the function
\begin{equation}
\tau\in\mathbb{C}^-\cup\mathbb{R}\mapsto H^{\alpha}\e^{-\ii\tau H}\psi\in\hilb
\end{equation}
is a well-defined bounded function, analytic in $\mathbb{C}^-$ and continuous up to $\mathbb{R}$. If we take $\psi\notin\Dom H^{\alpha}$, we still obtain a well-defined analytic map in $\mathbb{C}^-$, but the latter cannot be extended to the real axis nor it is bounded.

\section{Evolution of quantum averages}\label{sec:2}
Recall that quantum observables are associated with self-adjoint operators $A$ on $\hilb$. In particular, let $A$ be a positive self-adjoint operator on $\hilb$, with domain $\Dom A$. The average value $\braket{A}_\psi$ of $A$ at the state $\psi$ is well-defined if and only if $\psi\in\Dom A^{1/2}$, which is the \textit{form domain} of $A$, and reads
\begin{equation}
\braket{A}_\psi=\|A^{1/2}\psi\|^2.
\end{equation}
However, the average value of $A$ at any time $t\in\mathbb{R}$, i.e.
\begin{equation}
\braket{A}_\psi(t)=\|A^{1/2}\e^{-\ii tH}\psi\|^2,\qquad t\in\mathbb{R},
\end{equation}
will generally be \textit{ill-defined} unless some minimal requirements on the relation between $A$ and the Hamiltonian $H$ are made. Indeed, even if $\psi\in\Dom A^{1/2}$, in general $\e^{-\ii tH}\psi$ need not be in $\Dom A^{1/2}$ for all $t\in\mathbb{R}$. A minimal requirement ensuring well-definiteness of $\braket{A}_\psi(t)$ for all times is the following:
\begin{lemma}\label{lemma2}
	Let $A$ and $H$ be positive self-adjoint operators on $\hilb$ such that
	\begin{equation}\label{eq:formdomains}
	\Dom A^{1/2}\supset\Dom H^{\alpha},
	\end{equation}
	for some $\alpha\geq0$. Then, for all $\psi\in\Dom H^{\alpha}$, the map
	\begin{equation}\label{eq:avmap}
	t\in\mathbb{R}\mapsto A^{1/2}\e^{-\ii tH}\psi\in\hilb
	\end{equation}
	is well-defined.
\end{lemma}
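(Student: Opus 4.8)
The plan is to reduce the statement to the invariance property already recorded in Eq.~(\ref{eq:moment}), namely $\e^{-\ii tH}\Dom H^\alpha=\Dom H^\alpha$ for every $t\in\R$. The content of ``well-defined'' here is simply that $\e^{-\ii tH}\psi$ lies in $\Dom A^{1/2}$ --- the form domain of $A$ --- for \emph{every} $t$, so that the vector $A^{1/2}\e^{-\ii tH}\psi\in\hilb$ is meaningful; it is this membership, and nothing more, that has to be checked.

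First I would fix $\psi\in\Dom H^\alpha$. Since $\lambda\mapsto\e^{-\ii t\lambda}$ and $\lambda\mapsto\lambda^\alpha$ are both Borel functions of the self-adjoint operator $H$, the functional calculus gives that $\e^{-\ii tH}$ commutes with $H^\alpha$; in particular $\e^{-\ii tH}$ maps $\Dom H^\alpha$ into itself, because $H^\alpha\e^{-\ii tH}\psi=\e^{-\ii tH}H^\alpha\psi$ is well-defined with finite norm $\|H^\alpha\psi\|$ (the group being unitary), and, running the same argument with $-t$, $\e^{-\ii tH}$ is in fact a bijection of $\Dom H^\alpha$ onto itself. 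This is exactly Eq.~(\ref{eq:moment}), so $\e^{-\ii tH}\psi\in\Dom H^\alpha$ for all $t\in\R$. Then I would simply invoke the hypothesis~(\ref{eq:formdomains}): since $\Dom H^\alpha\subset\Dom A^{1/2}$, we conclude $\e^{-\ii tH}\psi\in\Dom A^{1/2}$ for every $t\in\R$, and hence $t\mapsto A^{1/2}\e^{-\ii tH}\psi$ is a well-defined map from $\R$ to $\hilb$. The boundary case $\alpha=0$ is degenerate but consistent: there $\Dom H^{0}=\hilb$, the hypothesis forces $\Dom A^{1/2}=\hilb$, i.e.\ $A$ bounded, and the claim is immediate.

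I do not expect any genuine obstacle in this argument --- the only points requiring a little care are bookkeeping ones. First, one is asserting only $\hilb$-valuedness of the map~(\ref{eq:avmap}), not its continuity; continuity is a strictly stronger statement that would additionally require, via the closed graph theorem applied to the everywhere-defined operator $A^{1/2}(1+H)^{-\alpha}$, its boundedness, and then the continuity of $t\mapsto(1+H)^{\alpha}\e^{-\ii tH}\psi=\e^{-\ii tH}(1+H)^{\alpha}\psi$. Second, the commutation ``$[\e^{-\ii tH},H^\alpha]=0$'' must be read in the appropriate unbounded-operator sense, i.e.\ as the domain identity~(\ref{eq:moment}) together with $H^\alpha\e^{-\ii tH}\psi=\e^{-\ii tH}H^\alpha\psi$ on $\Dom H^\alpha$, rather than as a naive operator equality on all of $\hilb$.
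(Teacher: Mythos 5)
Your argument is correct and coincides with the paper's proof, which simply declares the lemma an immediate consequence of the invariance property $\e^{-\ii tH}\Dom H^{\alpha}=\Dom H^{\alpha}$ of Eq.~(\ref{eq:moment}) combined with the hypothesis $\Dom A^{1/2}\supset\Dom H^{\alpha}$. You merely spell out the functional-calculus justification of Eq.~(\ref{eq:moment}) and the reading of ``well-defined'' as membership of $\e^{-\ii tH}\psi$ in the form domain of $A$, which the paper leaves implicit.
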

\begin{proof}
	Immediate consequence of Eq.~(\ref{eq:moment}).
\end{proof}
Property~(\ref{eq:formdomains}) means that the form domain of $A$ contains the form domain of the $\alpha$th power of $H$, i.e., physically, that all states with well-defined $\alpha$th moment of the energy also have a well-defined average value of $A$. Notice that, if~(\ref{eq:formdomains}) holds for some $\alpha$, then it also holds for all $\beta\geq\alpha$: the smaller the minimal value of $\alpha$ for which~(\ref{eq:formdomains}) holds, the larger is the linear subspace for which the map~(\ref{eq:avmap}) is well-defined. In particular, $\alpha=0$ corresponds to the case of a bounded observable, for which no domain issues arise and~(\ref{eq:avmap}) exists for all $\psi\in\hilb$.

Recalling Proposition~\ref{prop:evolana}, we can now provide an analytic continuation of the function $t\in\mathbb{R}\mapsto A^{1/2}\e^{-\ii tH}\psi\in\hilb$:
\begin{proposition}\label{prop:ossana}
		Let $A$ and $H$ be positive self-adjoint operators on $\hilb$ satisfying $\Dom A^{1/2}\supset\Dom H^{\alpha}$ for some $\alpha\geq0$. Then, for all $\psi\in\Dom H^{\alpha}$, the map
	\begin{equation}
	\tau\in\mathbb{C}^-\cup\mathbb{R}\mapsto A^{1/2}\e^{-\ii\tau H}\psi\in\hilb
	\label{eq:mapa1}
	\end{equation}
	is a bounded vector-valued function, analytic in $\mathbb{C}^-$ and continuous up to the real axis.
\end{proposition}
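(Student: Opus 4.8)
The plan is to reduce the statement to Proposition~\ref{prop:evolana} and the corollary stated right after it, by first upgrading the bare domain inclusion~(\ref{eq:formdomains}) to a \emph{relative boundedness} estimate for $A^{1/2}$ with respect to $H^{\alpha}$.

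First I would note that, since $H\geq 0$, the operator $(1+H^{\alpha})^{-1}$ is bounded on $\hilb$ and maps $\hilb$ bijectively onto $\Dom H^{\alpha}$. By hypothesis $\Dom H^{\alpha}\subseteq\Dom A^{1/2}$, so $C:=A^{1/2}(1+H^{\alpha})^{-1}$ is defined on all of $\hilb$; being the composition of the bounded operator $(1+H^{\alpha})^{-1}$ with the closed operator $A^{1/2}$, it is closed, hence bounded by the closed graph theorem. Equivalently, there exist constants $a,b\geq 0$ with $\|A^{1/2}\phi\|\leq a\|H^{\alpha}\phi\|+b\|\phi\|$ for all $\phi\in\Dom H^{\alpha}$.

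Next, fix $\psi\in\Dom H^{\alpha}$. By Proposition~\ref{prop:evolana}, $\e^{-\ii\tau H}\psi\in\Dom H^{\alpha}$ for every $\tau\in\mathbb{C}^-\cup\mathbb{R}$, so one may write
\begin{equation*}
A^{1/2}\e^{-\ii\tau H}\psi=C\,(1+H^{\alpha})\,\e^{-\ii\tau H}\psi=C\,\e^{-\ii\tau H}\psi+C\,H^{\alpha}\e^{-\ii\tau H}\psi .
\end{equation*}
The map $\tau\mapsto\e^{-\ii\tau H}\psi$ is bounded on $\mathbb{C}^-\cup\mathbb{R}$, analytic in $\mathbb{C}^-$ and continuous up to $\mathbb{R}$ by Proposition~\ref{prop:evolana}, and $\tau\mapsto H^{\alpha}\e^{-\ii\tau H}\psi$ has the same three properties by the corollary of Proposition~\ref{prop:evolana} --- this is where $\psi\in\Dom H^{\alpha}$ is needed, in order to reach the real axis. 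Applying the bounded linear operator $C$ to each of these maps, and then summing, preserves boundedness, analyticity in $\mathbb{C}^-$ and continuity up to $\mathbb{R}$; this yields the claim. If one prefers to bypass the corollary, the same factorization reduces boundedness to the elementary bound $\|H^{\alpha}\e^{-\ii\tau H}\psi\|^{2}=\int_{0}^{\infty}\lambda^{2\alpha}\e^{-2\eta\lambda}\,\mathrm{d}\mu_{\psi}(\lambda)\leq\|H^{\alpha}\psi\|^{2}$ (with $\eta=-\Im\tau\geq 0$) and analyticity to differentiation under the integral sign.

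The only step that is not purely formal is the passage from the domain inclusion~(\ref{eq:formdomains}) to the relative bound: it relies on the closed graph theorem together with the fact that $1+H^{\alpha}$ is boundedly invertible because $H\geq 0$. Once the factorization $A^{1/2}\e^{-\ii\tau H}\psi=C\,(1+H^{\alpha})\,\e^{-\ii\tau H}\psi$ is available, the proposition follows immediately from the results of Section~\ref{sec:1}, since $C$ is bounded and all the analytic work has already been carried out there.
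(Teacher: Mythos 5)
Your proof is correct and follows essentially the same route as the paper: the crux in both cases is upgrading the inclusion $\Dom A^{1/2}\supset\Dom H^{\alpha}$ to a relative bound of $A^{1/2}$ with respect to $H^{\alpha}$ (the paper cites a standard lemma for this, you rederive it via the closed graph theorem), and then transferring boundedness, analyticity in $\mathbb{C}^-$ and continuity up to $\mathbb{R}$ from the maps $\tau\mapsto\e^{-\ii\tau H}\psi$ and $\tau\mapsto H^{\alpha}\e^{-\ii\tau H}\psi$ provided by Proposition~\ref{prop:evolana} and its corollary. Your factorization through the bounded operator $C=A^{1/2}(1+H^{\alpha})^{-1}$ is only a mild repackaging that makes the transfer of analyticity and continuity explicit, where the paper argues the same point directly from the relative bound.
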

\begin{proof}
		By Proposition
		~\ref{prop:evolana} (see Eq.~(\ref{eq:regular})) and the assumption $\Dom A^{1/2}\supset\Dom H^{\alpha}$, for all $\tau\in\mathbb{C}^-$ we have $\e^{-\ii\tau H}\psi\in\Dom A^{1/2}$, hence the map in~(\ref{eq:mapa1}) is well-defined.
	
	To prove the remaining properties, notice that, since both $A^{1/2}$ and $H^{\alpha}$ are self-adjoint operators, the property $\Dom A^{1/2}\supset\Dom H^{\alpha}$ automatically implies that $A^{1/2}$ is relatively bounded with respect to $H^{\alpha}$ (see e.g.~\cite[Lemma~6.2]{Teschl}), i.e.~there exist $a,b\geq0$ such that, for all $\psi\in\Dom H^{\alpha}$,
	\begin{equation}
	\|A^{1/2}\psi\|\leq a\|H^{\alpha}\psi\|+b\|\psi\|.
	\end{equation}
	This allows us to prove boundedness of the map $\tau\mapsto A^{1/2}\e^{-\ii\tau H}\psi$, since
	\begin{eqnarray}
	\|A^{1/2}\e^{-\ii\tau H}\psi\|&\leq&a\|H^{\alpha}\e^{-\ii\tau H}\psi\|+b\|\e^{-\ii\tau H}\psi\|\nonumber\\
	&\leq&a\|H^{\alpha}\psi\|+b \|\psi\|.
	\end{eqnarray}
	Continuity on $\mathbb{R}$ and analyticity in $\mathbb{C}^-$ follow similarly from the continuity and analyticity of $\tau\mapsto\e^{-\ii\tau H}\psi$, $H^{\alpha}\e^{-\ii\tau H}\psi$, and the bounds above.
\end{proof}
We are now ready to state our result:
\begin{theorem} \label{thm:genPW}
Let $A$ and $H$ be positive operators on $\hilb$ satisfying $\Dom A^{1/2}\supset\Dom H^{\alpha}$ for some $\alpha\geq0$. Then for all $\psi\in\Dom H^{\alpha}$, we have
\begin{equation}
	\int_{-\infty}^{\infty}\frac{\log\|A^{1/2}\e^{-\ii tH}\psi\|}{1+t^2}\,\mathrm{d}t>-\infty.
	\label{eq:khalfinh}
\end{equation}
\end{theorem}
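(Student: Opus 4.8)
The plan is to derive~(\ref{eq:khalfinh}) from Proposition~\ref{prop:ossana} together with a classical Paley--Wiener-type logarithmic inequality, reducing the vector-valued situation to a scalar one by Cauchy--Schwarz. Write $F(\tau):=A^{1/2}\e^{-\ii\tau H}\psi$ for $\tau\in\mathbb{C}^-\cup\R$. By Proposition~\ref{prop:ossana}, under the stated hypotheses $F$ is a bounded $\hilb$-valued function, analytic in $\mathbb{C}^-$ and continuous up to the real axis; put $M:=\sup_\tau\|F(\tau)\|<\infty$. We may assume the boundary map $t\mapsto F(t)$ does not vanish identically on $\R$, since otherwise the integrand in~(\ref{eq:khalfinh}) equals $-\infty$ everywhere and there is nothing to prove (the statement being understood to exclude the degenerate case $A^{1/2}\e^{-\ii tH}\psi\equiv0$). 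Fix $t_0\in\R$ with $F(t_0)\neq0$ and set
\begin{equation}
	g(\tau):=\braket{F(t_0)\,|\,F(\tau)},\qquad \tau\in\mathbb{C}^-\cup\R.
\end{equation}
Since $g$ is the composition of the analytic $\hilb$-valued map $F$ with the bounded linear functional $\chi\mapsto\braket{F(t_0)|\chi}$, it is analytic in $\mathbb{C}^-$ and continuous up to $\R$; by Cauchy--Schwarz $|g(\tau)|\leq\|F(t_0)\|\,\|F(\tau)\|\leq M\|F(t_0)\|$, so $g$ is bounded, while $g(t_0)=\|F(t_0)\|^2>0$, so $g\not\equiv0$.

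The core step is the scalar logarithmic inequality: a bounded analytic function on $\mathbb{C}^-$, continuous up to the real axis and not identically zero, has boundary values obeying
\begin{equation}
	\int_{-\infty}^\infty\frac{\log|g(t)|}{1+t^2}\,\mathrm{d}t>-\infty.
\end{equation}
This is the Paley--Wiener mechanism already invoked for~(\ref{eq:paley}); the appendix proves the $\hilb$-valued generalization of it, which is the one genuinely nontrivial ingredient of the whole argument---indeed, one could apply that generalization directly to $F$ and skip the scalar reduction altogether. Everything else is bookkeeping, and this logarithmic inequality is where I expect the only real work to be.

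Finally, I would transfer the bound back to $F$. For every $t\in\R$, Cauchy--Schwarz gives $|g(t)|\leq\|F(t_0)\|\,\|F(t)\|$, hence $\log\|F(t)\|\geq\log|g(t)|-\log\|F(t_0)\|$ (trivially, both sides being $-\infty$, at points where $F(t)=0$). Dividing by $1+t^2$, integrating, and using $\int_{-\infty}^\infty(1+t^2)^{-1}\,\mathrm{d}t=\pi$ together with the previous display,
\begin{equation}
	\int_{-\infty}^\infty\frac{\log\|A^{1/2}\e^{-\ii tH}\psi\|}{1+t^2}\,\mathrm{d}t\;\geq\;\int_{-\infty}^\infty\frac{\log|g(t)|}{1+t^2}\,\mathrm{d}t-\pi\log\|F(t_0)\|\;>\;-\infty,
\end{equation}
which is~(\ref{eq:khalfinh}); boundedness of $F$ also bounds the left-hand side by $\pi\log M<\infty$, so the integral is in fact finite. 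The main obstacle is thus the logarithmic inequality relegated to the appendix; the reduction via Proposition~\ref{prop:ossana} and Cauchy--Schwarz is routine.
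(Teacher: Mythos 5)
Your argument is correct, and it deviates from the paper's proof precisely at the key-lemma step. The paper's route is: Proposition~\ref{prop:ossana} gives that $F(\tau)=A^{1/2}\e^{-\ii\tau H}\psi$ is bounded, analytic in $\mathbb{C}^-$ and continuous up to $\R$, and then the vector-valued logarithmic inequality of Lemma~\ref{lm:genPW} is applied directly to $F$; that lemma is proved in the appendix by showing $\log\|\hat{F}\|$ is subharmonic (Proposition~\ref{khalfinh}), via Riesz--Fr\'echet, a supremum over unit functionals and an $\epsilon$-argument at the center of the disk. You keep the same first step but bypass the vector-valued lemma entirely: you test $F$ against the single fixed vector $F(t_0)$ at a boundary point where $F$ does not vanish, obtaining a scalar function $g(\tau)=\braket{F(t_0)|F(\tau)}$ that is bounded, analytic, continuous up to $\R$ and not identically zero, apply the classical scalar inequality~(\ref{eq:logarithmic1}), and transfer the bound back by Cauchy--Schwarz, $\log\|F(t)\|\geq\log|g(t)|-\log\|F(t_0)\|$. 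This is a genuinely more elementary reduction (one fixed functional instead of the sup/subharmonicity machinery), at the price of taking the scalar half-plane inequality as known---which the paper itself does essentially, deriving it via conformal mapping and Jensen's formula. What the paper's route buys in exchange is a statement of independent interest (subharmonicity of $\log\|\hat F\|$, which also extends beyond Hilbert spaces), but for the theorem at hand your shortcut suffices. Two small points in your favor: your explicit treatment of the degenerate case $F\equiv0$ on $\R$ matches the ``nonzero'' hypothesis of Lemma~\ref{lm:genPW} that the theorem's statement leaves implicit; and when integrating the pointwise inequality you correctly rely on the uniform upper bound $\log\|F(t)\|\leq\log M$ to ensure both integrals are well-defined in $[-\infty,\infty)$, so the comparison of integrals is legitimate.
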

\begin{proof}
	By Proposition~\ref{prop:ossana}, the function $\tau\in\mathbb{C}^-\cup\mathbb{R}\mapsto F(\tau)=A^{1/2}\e^{-\ii\tau H}\psi\in\hilb$ is bounded, analytic in $\mathbb{C}^-$ and continuous up to $\mathbb{R}$. As shown in Lemma~\ref{lm:genPW} in the appendix, such function must satisfy a logarithmic inequality
	\begin{equation}\label{eq:logarithmic2}
	\int_{-\infty}^{\infty}\frac{\log\|F(t)\|}{1+t^2}\,\mathrm{d}t>-\infty,
	\end{equation}
	which is Eq.~(\ref{eq:khalfinh}) in our case.
\end{proof}
Consequently, the map $t\mapsto \braket{A}_\psi(t)=\|A^{1/2}\e^{-\ii tH}\psi\|^2$ \textit{cannot} be exponential at large times, i.e., if
\begin{equation}
\lim_{t\to\infty}\braket{A}_\psi(t)=0,
\end{equation}
then the convergence must be subexponential. We have thus generalized Khalfin's argument to the average value of any positive observable, whenever it is well-defined. 

Obviously, there is nothing special about positive observables: if $A$ is bounded from below with $a\in\mathbb{R}$ being the infimum of its spectrum, then $A\geq a$, and the argument shows that the average value of $A$ cannot converge exponentially to $a$. Moreover, if $A$ is bounded from above, then $-A$ is bounded from below. We can therefore conclude, as claimed in the Introduction, that the average value of a quantum observable cannot converge exponentially to a finite \textit{extremal} value of its spectrum.

 Exponential convergence to a non-extremal value of the spectrum, instead, is not ruled out in general. Indeed, let $A$ be a quantum observable and $\psi$ satisfying $\braket{A}_\psi(t)\to0$. In the most general case, we can always decompose $A$ in its domain as
\begin{equation}
A=A_+-A_-,\qquad A_\pm\geq0,
\end{equation}
and thus
\begin{equation}
\braket{A}_\psi(t)=\braket{A_+}_\psi(t)-\braket{A_-}_\psi(t).
\end{equation}
Suppose that both $\braket{A_\pm}_\psi(t)$ converge to $0$ as well: by our argument, the decay laws of $A_\pm$ cannot be exponential at large times, but, in principle, slower decaying terms of $A_+$ and $A_-$ may cancel out, thus yielding a purely exponential decay of $\braket{A}_\psi(t)$. Explicit examples of this phenomenon were given in~\cite{Facchi17}. This cannot happen if either $A_+=0$ or $A_-=0$, i.e.~when the limiting value $0$ is an extremal point of the spectrum of $A$.

In general, our argument allows us to conclude that, \textit{if} the average value of a positive $A$ decays to zero, then it cannot decay exponentially. As a concrete application of our argument to a family of cases in which we know that the average decays, we will provide the following corollary. Recall that $A$ is said to be \textit{relatively compact} with respect to $H$ if the operator $A\,R_H(z)$, with $R_H(z)$ being the resolvent of $H$ at $z\in\mathbb{C}$, is compact for some $z$. Besides, relative compactness of $A$ implies $\Dom A\supset\Dom H$.
\begin{corollary}
	Let $A$ and $H$ be positive self-adjoint operators such that $A^{1/2}$ is \textit{relatively compact} with respect to $H$. Then, for every $\psi\in\Dom H$ in the absolutely continuous subspace of $H$, the map
	\begin{equation}
		t\in\mathbb{R}\mapsto\|A^{1/2}\e^{-\ii tH}\psi\|^2\in\mathbb{R}
	\end{equation}
	decays \textit{subexponentially} as $t\to\infty$.
\end{corollary}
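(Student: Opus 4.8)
The plan is to combine Theorem~\ref{thm:genPW} (which already forbids exponential decay whenever $\braket{A}_\psi(t)\to 0$) with an independent argument showing that, under the relative-compactness hypothesis, the quantity $\|A^{1/2}\e^{-\ii tH}\psi\|^2$ actually \emph{does} tend to zero as $t\to\infty$ for $\psi$ in the absolutely continuous subspace. Once both ingredients are in place, the corollary is immediate: the limit is zero by the second argument, hence the convergence is subexponential by Theorem~\ref{thm:genPW}. So the real content is establishing the decay to zero.

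For that, first I would note that relative compactness of $A^{1/2}$ with respect to $H$ gives $\Dom A^{1/2}\supset\Dom H$, so the map $t\mapsto A^{1/2}\e^{-\ii tH}\psi$ is well-defined for $\psi\in\Dom H$ (this is the $\alpha=1$ case of Lemma~\ref{lemma2}), and Theorem~\ref{thm:genPW} applies. Next, I would write $A^{1/2}\e^{-\ii tH}\psi = \bigl(A^{1/2}R_H(z)\bigr)\,(H-z)\,\e^{-\ii tH}\psi = \bigl(A^{1/2}R_H(z)\bigr)\,\e^{-\ii tH}(H-z)\psi$, using that $\e^{-\ii tH}$ commutes with $H$ and preserves $\Dom H$. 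The operator $K:=A^{1/2}R_H(z)$ is compact by hypothesis, and $\varphi:=(H-z)\psi\in\hilb$; moreover $\varphi$ lies in the absolutely continuous subspace of $H$ because that subspace is $\e^{-\ii tH}$-invariant and $H$-reducing. Therefore the claim reduces to: for a compact operator $K$ and a vector $\varphi$ in the absolutely continuous subspace of $H$, one has $\|K\e^{-\ii tH}\varphi\|\to 0$ as $t\to\infty$. This is a standard consequence of the Riemann--Lebesgue lemma: for a rank-one $K=\ket{\chi}\!\bra{\phi}$ it reads $|\braket{\phi|\e^{-\ii tH}\varphi}|\to 0$, which holds because $\braket{\phi|\e^{-\ii tH}\varphi}$ is the Fourier transform of the complex measure $\braket{\phi|P_H(\cdot)\varphi}$, absolutely continuous with respect to Lebesgue measure on the relevant part of the spectrum; the general compact case follows by approximating $K$ in operator norm by finite-rank operators and using $\sup_t\|\e^{-\ii tH}\varphi\|=\|\varphi\|$. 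Finally, $\|A^{1/2}\e^{-\ii tH}\psi\|^2 = \|K\e^{-\ii tH}\varphi\|^2\to 0$, and Theorem~\ref{thm:genPW} upgrades this to subexponential decay.

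The main obstacle is not conceptual but a matter of stating the Riemann--Lebesgue/RAGE-type input cleanly: one must be slightly careful that $z\notin\sigma(H)$ can be chosen (e.g.\ $z$ in the lower half-plane, since $H\geq 0$), that $R_H(z)$ genuinely maps $\hilb$ onto $\Dom H$ so the factorization is legitimate, and that "absolutely continuous subspace" is used consistently—i.e.\ invariant under the dynamics and under the resolvent, which it is since it is a spectral subspace of $H$. I would also remark that the absolute-continuity hypothesis cannot be dropped: an eigenvector $\psi$ of $H$ gives $\|A^{1/2}\e^{-\ii tH}\psi\|^2=\|A^{1/2}\psi\|^2$ constant in time, so there is no decay at all, consistent with (but not contradicting) Theorem~\ref{thm:genPW}. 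Everything else—relative boundedness, invariance of $\Dom H$ under $\e^{-\ii tH}$, the reduction to Theorem~\ref{thm:genPW}—is routine and already available in the excerpt.
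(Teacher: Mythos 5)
Your proof is correct and follows essentially the same route as the paper: establish that $\|A^{1/2}\e^{-\ii tH}\psi\|^2\to 0$ for $\psi\in\Dom H$ in the absolutely continuous subspace, then invoke Theorem~\ref{thm:genPW} with $\alpha=1$. The only difference is that the paper simply cites this decay-to-zero step (Teschl, Theorem~5.6), whereas you reprove it yourself via the factorization $A^{1/2}\e^{-\ii tH}\psi=\bigl(A^{1/2}R_H(z)\bigr)\e^{-\ii tH}(H-z)\psi$ together with the Riemann--Lebesgue lemma and finite-rank approximation of the compact operator, which is a valid self-contained argument.
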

\begin{proof}
Under our hypotheses it is known (see e.g.~\cite[Theorem 5.6]{Teschl}) that $t\in\mathbb{R}\mapsto\|A^{1/2}\e^{-\ii tH}\psi\|^2$ converges to zero. Theorem~\ref{thm:genPW} then holds with $\alpha=1$.
\end{proof}

We finally remark that similar results can be straightforwardly inferred for higher moments of $A$: if we suppose $\Dom A^{\delta/2}\supset\Dom H^{\alpha}$ for some $\delta,\alpha\geq0$, then the average value of $A^\delta$, i.e.~the $\delta$th moment of $A$ will satisfy an analogous constraint and thus, in particular, will not undergo any exponential decay.

\section{Extension to mixed states, and an application to open quantum systems}\label{sec:3}

As a starting point, notice that the discussion above may be easily replicated, \textit{mutatis mutandis}, to the case in which the quantum system starts its evolution from a possibly \textit{mixed} state, represented by a positive trace-class operator $\varrho\in\mathcal{B}(\hilb)$ of trace one. For our purposes, it will be sufficient to restrict our attention to the following case:
\begin{equation}\label{eq:sqrt}
	\varrho=\sum_{j=1}^rp_j\ket{\psi_j}\!\bra{\psi_j},
\end{equation}
with $r$ being a finite integer, $p_1,\dots,p_r$ being a family of nonnegative real numbers summing to one, and $\psi_1,\dots,\psi_r\in\hilb$ of unit norm. We shall again consider a positive operator such that $\Dom A^{1/2}\supset\Dom H^\alpha$ for some $\alpha\geq0$. Then, as an immediate consequence of Lemma~\ref{lemma2}, if $\psi_j\in \Dom H^\alpha$, the map
\begin{equation}\label{eq:hsmap}
	t\in\mathbb{R}\mapsto A^{1/2}\e^{-\ii tH}\sqrt{\varrho}\in\mathrm{HS}
\end{equation}
is well-defined. Here, $\mathrm{HS}$ is the space of Hilbert-Schmidt operators on $\hilb$, which is itself a Hilbert space with the scalar product $\Braket{T|S}_{\rm HS}:=\tr T^\dag S$. In this case, the average value of $A$ is defined at all times:
\begin{equation}
	\Braket{A}_\varrho(t)=\tr\left[A\e^{-\ii tH}\varrho\,\e^{\ii tH}\right]=\left\|A^{1/2}\e^{-\ii tH}\sqrt{\varrho}\right\|^2_{\rm HS}.
\end{equation}
\begin{corollary}\label{coroll2}
	Let $A$ and $H$ be positive self-adjoint operator on $\hilb$ satisfying $\Dom A^{1/2}\supset\Dom H^\alpha$ for some $\alpha\geq0$, and let $\varrho\in\mathcal{B}(\hilb)$ be a density operator as in~(\ref{eq:sqrt}), such that $\psi_1,\dots,\psi_r\in\Dom H^\alpha$. Then
	\begin{equation}
		\int_{-\infty}^{\infty}\frac{\log\left\|A^{1/2}\e^{-\ii tH}\sqrt{\varrho}\,\right\|_{\rm HS}}{1+t^2}\,\mathrm{d}t>-\infty.
		\label{eq:khalfinhbis}
	\end{equation}
\end{corollary}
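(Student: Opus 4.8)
The plan is to reduce Corollary~\ref{coroll2} to Theorem~\ref{thm:genPW} by viewing the Hilbert--Schmidt evolution as a single vector-valued analytic function on an auxiliary Hilbert space, and then invoking the same logarithmic inequality (Lemma~\ref{lm:genPW}) that underlies the proof of Theorem~\ref{thm:genPW}. Concretely, the map $t\mapsto A^{1/2}\e^{-\ii tH}\sqrt{\varrho}\in\mathrm{HS}$ is exactly of the form studied before, except that the target Hilbert space is $\mathrm{HS}$ rather than $\hilb$; since Lemma~\ref{lm:genPW} is stated for arbitrary vector-valued analytic functions, it will apply verbatim once the requisite boundedness, analyticity in $\mathbb{C}^-$, and continuity up to $\mathbb{R}$ are established for $F(\tau):=A^{1/2}\e^{-\ii\tau H}\sqrt{\varrho}$.

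First I would establish these three properties. Writing $\sqrt{\varrho}=\sum_{j=1}^r\sqrt{p_j}\,\ket{\psi_j}\!\bra{\psi_j}$ (or, more carefully, the positive square root of $\varrho$, which since $\varrho$ has finite rank $r$ is again a finite-rank operator whose range lies in $\mathrm{span}\{\psi_1,\dots,\psi_r\}$), the operator $A^{1/2}\e^{-\ii\tau H}\sqrt\varrho$ has rank at most $r$, and its Hilbert--Schmidt norm is controlled by $\|A^{1/2}\e^{-\ii\tau H}\sqrt\varrho\|_{\rm HS}\le\sqrt r\,\|A^{1/2}\e^{-\ii\tau H}\sqrt\varrho\|_{\rm op}$. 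A cleaner route: since $\sqrt\varrho$ is Hilbert--Schmidt, pick an orthonormal basis $\{e_k\}$ of $\hilb$ and note $\|A^{1/2}\e^{-\ii\tau H}\sqrt\varrho\|_{\rm HS}^2=\sum_k\|A^{1/2}\e^{-\ii\tau H}\sqrt\varrho\, e_k\|^2$. Each vector $\sqrt\varrho\,e_k$ is a finite linear combination of $\psi_1,\dots,\psi_r\in\Dom H^\alpha$, so each summand is a function of $\tau$ to which Proposition~\ref{prop:ossana} applies directly: it is bounded by $(a\|H^\alpha\sqrt\varrho\,e_k\|+b\|\sqrt\varrho\,e_k\|)^2$, analytic in $\mathbb{C}^-$, and continuous up to $\mathbb{R}$. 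Summing over $k$ and using $\sum_k\|H^\alpha\sqrt\varrho\,e_k\|^2=\|H^\alpha\sqrt\varrho\|_{\rm HS}^2<\infty$ (finite because $\sqrt\varrho$ has finite rank with range in $\Dom H^\alpha$) gives a uniform bound on $\|F(\tau)\|_{\rm HS}$; analyticity and boundary continuity of the HS-valued map then follow from the finite-rank structure, since only finitely many of the $\sqrt\varrho\,e_k$ are nonzero once we choose the basis adapted to $\mathrm{span}\{\psi_j\}$, making $F$ a finite sum of $\hilb$-valued analytic functions tensored against fixed vectors.

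Having verified that $F$ satisfies the hypotheses of Lemma~\ref{lm:genPW} with $\hilb$ replaced by $\mathrm{HS}$, I would simply apply that lemma to obtain
\begin{equation}
\int_{-\infty}^{\infty}\frac{\log\|F(t)\|_{\rm HS}}{1+t^2}\,\mathrm{d}t>-\infty,
\end{equation}
which is precisely~(\ref{eq:khalfinhbis}). I expect the main obstacle to be the bookkeeping around the square root of $\varrho$ and the Hilbert--Schmidt norm: one must be slightly careful that $\sqrt\varrho$ maps into $\Dom H^\alpha$ (which holds because its range is the finite-dimensional span of $\psi_1,\dots,\psi_r$, each assumed in $\Dom H^\alpha$) and that $H^\alpha\sqrt\varrho$ remains Hilbert--Schmidt (again automatic by finite rank), so that the relative bound $\|A^{1/2}T\|_{\rm HS}\le a\|H^\alpha T\|_{\rm HS}+b\|T\|_{\rm HS}$ applied to $T=\e^{-\ii\tau H}\sqrt\varrho$ yields the needed uniform estimate. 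No genuinely new analytic input is required beyond what is already in Proposition~\ref{prop:ossana} and Lemma~\ref{lm:genPW}; the content is entirely in recognizing that the HS setting is a special case of the vector-valued framework.
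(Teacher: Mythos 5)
Your proof is correct and follows essentially the same route as the paper: the paper's own proof just invokes Proposition~\ref{prop:ossana} to get boundedness, analyticity in $\mathbb{C}^-$ and continuity up to $\mathbb{R}$ of the $\mathrm{HS}$-valued map, and then applies the logarithmic inequality of Lemma~\ref{lm:genPW} with the Hilbert--Schmidt norm in place of the $\hilb$-norm, exactly as you do. Your extra bookkeeping on the finite-rank structure of $\sqrt\varrho$ (its range lying in $\mathrm{span}\{\psi_1,\dots,\psi_r\}\subset\Dom H^\alpha$) simply makes explicit what the paper leaves implicit.
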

\begin{proof}
	As a direct consequence of Proposition~\ref{prop:ossana}, the map~(\ref{eq:hsmap}) admits a bounded $\mathrm{HS}$-valued extension to $\mathbb{C}^-\cup\mathbb{R}$ which is analytic in $\mathbb{C}^-$ and continuous up to the real axis. Whence the claim follows analogously as in the pure state case by simply replacing the norm of $\hilb$ with the Hilbert--Schmidt norm.
\end{proof}
Consequently, the discussion in the previous section holds without substantial differences if we consider a system whose dynamics starts from a mixed state $\varrho$, modulo some minor technical assumptions ensuring the operator $A^{1/2}\e^{-\ii tH}\sqrt{\varrho}$ to be well-defined and Hilbert-Schmidt at all time.

As a simple application of this result, let us consider a bipartite quantum system, represented by a Hilbert space $\hilb=\hilb_{\rm S}\otimes\hilb_{\rm E}$. Physically, $\hilb_{\rm S}$ is the space associated with the experimentally accessible system, with finite dimension $\dim\hilb_{\rm S}=d$, while $\hilb_{\rm E}$ represents the external, inaccessible environment. Suppose that, at the initial time, the system and the environment are in a separable (uncorrelated state)
\begin{equation}
	\varrho=\rho\otimes\ket{\Omega_0}\!\bra{\Omega_0},
\end{equation}
with $\rho\in\mathcal{B}(\hilb_{\rm S})$ being a density matrix, and $\Omega_0\in\hilb_{\rm E}$. The reduced evolution induced on the system by a global Hamiltonian is thus represented by the map
\begin{equation}
	t\in\mathbb{R}^+\mapsto \Lambda_t(\rho)=\tr_{\rm E}\left[\e^{-\ii tH}\rho\otimes\ket{\Omega_0}\!\bra{\Omega_0}\e^{\ii tH}\right],
\end{equation}
with $\tr_E$ denoting the partial trace with respect to the environment's degrees of freedom. Given two orthonormal bases $\{\xi_j\}_{j=0,\dots,d-1}\subset\hilb_{\rm S}$, $\{\Omega_n\}_{n\in\mathbb{N}}\subset\hilb_{\rm E}$, one clearly has, for all $j$,
\begin{eqnarray}
	\Braket{\xi_j|\Lambda_t(\rho)\xi_j}&=&\braket{\xi_j|\tr_{\rm E}\left[\e^{-\ii tH}\rho\otimes\ket{\Omega_0}\!\bra{\Omega_0}\e^{\ii tH}\right]\xi_j}\nonumber\\
	&=&\sum_{n\in\mathbb{N}}\braket{\xi_j\otimes\Omega_n|\left(\e^{-\ii tH}\rho\otimes\ket{\Omega_0}\!\bra{\Omega_0}\e^{\ii tH}\right)\xi_j\otimes\Omega_n}\nonumber\\
	&=&\sum_{n\in\mathbb{N}}\left\|P_{jn}\e^{-\ii tH}\sqrt{\rho}\otimes\ket{\Omega_0}\!\bra{\Omega_0}\right\|^2_{\rm HS},
\end{eqnarray}
with $P_{jn} = \ket{\xi_j\otimes\Omega_n}\!\bra{\xi_j\otimes\Omega_n}$ being the rank-one projector associated with the vector $\xi_j\otimes\Omega_n$. As a direct consequence of Corollary~\ref{coroll2}, each term in the sum above, if converging to zero, vanishes more slowly than exponentially at large times, whence so does $\Braket{\xi_j|\Lambda_t(\rho)\xi_j}$.

In other words: given an open quantum system whose dynamics is generated by a global Hamiltonian $H$ with a spectrum bounded from below, then the \textit{diagonal elements} of the evolved density operator $\rho(t)=\Lambda_t(\rho)$ in any fixed basis cannot decay exponentially, a result that was first pointed out in~\cite{Beau}. Of course, this argument does \textit{not} apply to the off-diagonal elements of the density matrix, which may decay exponentially regardless the positivity of the global Hamiltonian, with the dephasing model studied in~\cite{Facchi17} (cf. also~\cite{Lonigro22}) being a clear counterexample: a positive Hamiltonian can generate a purely Markovian dephasing semigroup, involving a pure exponential decay of the off-diagonal elements of the density matrix.

Physically, the \textit{populations} of an open quantum system cannot deplete exponentially at arbitrarily large times: deviations from the exponential decay at sufficiently large times are predicted. Khalfin's argument is thus recovered as the unitary version of a more general statement. Instead, no such constraint is imposed on the \textit{coherences}, which may decay exponentially.

\section{Conclusions}\label{sec:4}
We have shown that, under quite general conditions, an exponential decay at large times of the average value of a positive quantum observable is prohibited whenever the Hamiltonian generating the dynamics has a finite ground energy, which is a physical requirement for quantum systems. An exponential decay of a quantum observable requires its spectrum to admit a nontrivial negative part: in that case, slower decaying terms generated by the positive and negative parts of the spectrum may cancel out in such a way to produce an overall exponential decay.

Our result generalizes well-known properties of the survival probability of a pure state, showing that deviations from the exponential decay are indeed a general features of quantum systems, while also clarifying precisely the conditions under which such deviations are expected to emerge. Mathematically, these results follow as a consequence of the possibility, in the presence of a Hamiltonian bounded from below, to define an analytic continuation of the evolution group from the real line to the lower complex half-plane: the evolution group is thus seen as the boundary limit of an analytic function, and this allows us to infer a constraint on its decay properties. Future works may be dedicated to infer other dynamical properties of quantum systems via a similar technique.
\bigskip

\textit{Acknowledgments}.  We acknowledge the support by PNRR MUR project CN00000013 `Italian National Centre on HPC, Big Data and Quantum Computing',
 by the Italian National Group of Mathematical Physics (GNFM-INdAM), and by Istituto Nazionale di Fisica Nucleare (INFN) through the project `QUANTUM'.
 
\appendix

\section{A logarithmic inequality}\label{app}

\renewcommand{\thesubsection}{\Alph{subsection}}
In this appendix we will prove Eq.~(\ref{eq:logarithmic2}) for an $\hilb$-valued bounded analytic function in the lower complex half-plane, continuous up to the real axis. 

Let us start from the scalar case: consider a nonzero function $f:\mathbb{C}^-\cup\mathbb{R}\rightarrow\mathbb{C}$, analytic in $\mathbb{C}^-$, continuous on the real axis, and bounded. We will denote the complex variable as $\tau\in\mathbb{C}^-\cup\mathbb{R}$. It is known that such functions must satisfy the logarithmic inequality
\begin{equation}\label{eq:logarithmic1}
	\int_{-\infty}^{\infty}\frac{\log|f(t)|}{1+t^2}\,\mathrm{d}t>-\infty.
\end{equation}
Let us briefly revise how to obtain Eq.~(\ref{eq:logarithmic1}). The standard technique is the following one:
\begin{itemize}
	\item we transform $f$ into an analytic function $\hat{f}$ in the unit disk $D_1=\{\zeta\in\mathbb{C}:|\zeta|<1\}$, continuous up to the boundary, via a conformal mapping (i.e.~an angle-preserving map) between $\mathbb{C}^-$ and $D_1$;
	\item we use a well-known property of such functions: $\log|\hat{f}|$ is subharmonic (Proposition~\ref{prop:subharmonic}), i.e.~it satisfies the inequality~(\ref{eq:jensen});
	\item we transfer this inequality back to the lower half-plane, thus obtaining Eq.~(\ref{eq:logarithmic1}).
\end{itemize}
We will sketch the main steps of this procedure. First of all, let us consider explicitly a conformal mapping between $\mathbb{C}^-$ and $D_1$. To this purpose, we can take the map
\begin{equation}\label{eq:conformal}
\tau\in\mathbb{C}^-\mapsto\hat{\zeta}(\tau)=\frac{\ii+\tau}{\ii-\tau}\in D_1,
\end{equation}
and its inverse
\begin{equation}\label{eq:conformal2}
\zeta\in D_1\mapsto\hat{\tau}(\zeta)=\ii\frac{\zeta-1}{\zeta+1}\in\mathbb{C}^-.
\end{equation}
 If we allow $\zeta$ to range on the closure of $D_1$, $\overline{D_1}=\left\{\zeta\in\mathbb{C}:|\zeta|\leq1\right\}$, then $\tau$ will range on the closed half plane $\mathbb{C}^-\cup\mathbb{R}$ plus the point at infinity $\infty$. In particular:
\begin{itemize}
	\item the real line $\mathbb{R}$ is mapped to the unit circle minus the point $\zeta=-1$;
	\item $\tau=\infty$ is mapped to the point $\zeta=-1$;
	\item $\tau=-\ii$ is mapped to the center $\zeta=0$ of the disk.
\end{itemize}
Let us transfer our problem from the half-plane to the unit disk. Define the function
\begin{equation}\label{eq:ftoh}
\hat{f}:D_1\rightarrow\mathbb{C},\qquad \hat{f}(\zeta)=f\left(\hat{\tau}(\zeta)\right)\in\mathbb{C}.
\end{equation}
By construction, $\hat{f}$ is analytic in $D_1$. Besides, since $f(\tau)$ is continuous up to the real axis (which is mapped to the circle minus $\zeta=-1$) \textit{and} is bounded, i.e.~it is also regular at $\infty$, $\hat{f}(\zeta)$ inherits the same properties of $f(\tau)$: it is analytic in the unit disk and continuous on the boundary.

 For such functions, the following inequality must hold:
\begin{proposition}\label{prop:subharmonic}
	Let the function 	$\zeta\in\overline{D_1}\mapsto \hat{f}(\zeta)\in\mathbb{C}$ be analytic in $D_1$ and continuous up to the boundary. Then $\zeta\in\overline{D_1}\mapsto\log|\hat{f}(\zeta)|\in\mathbb{R}\cup\{-\infty\}
	$
	is subharmonic, i.e.~it satisfies
	\begin{equation}\label{eq:jensen}
	\log|\hat{f}(0)|\leq\frac{1}{2\pi}\int_0^{2\pi}\log|\hat{f}(\e^{\ii\theta})|\,\mathrm{d}\theta.
	\end{equation}
\end{proposition}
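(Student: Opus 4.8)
The plan is to deduce inequality~(\ref{eq:jensen}) from the classical \emph{Jensen formula} for analytic functions on a disk, followed by a limiting argument as the radius tends to $1$. First I would dispose of the trivial case: if $\hat f(0)=0$, then $\log|\hat f(0)|=-\infty$ and~(\ref{eq:jensen}) holds automatically, so from now on I assume $\hat f(0)\neq0$. In particular $\hat f$ is not identically zero, hence its zero set in $D_1$ is discrete; consequently only countably many radii $R\in(0,1)$ are such that $\hat f$ vanishes somewhere on the circle $|\zeta|=R$, and I can choose a sequence $R_m\nearrow1$ avoiding all of them.

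For each $m$, since $\hat f$ is analytic on the open set $D_1\supset\overline{D_{R_m}}$, it has finitely many zeros $a_1^{(m)},\dots,a_{n_m}^{(m)}$ in $D_{R_m}$ (counted with multiplicity), none of them on the bounding circle, and Jensen's formula gives
\begin{equation}
\log|\hat f(0)|=\sum_{k=1}^{n_m}\log\frac{|a_k^{(m)}|}{R_m}+\frac{1}{2\pi}\int_0^{2\pi}\log\bigl|\hat f(R_m\e^{\ii\theta})\bigr|\,\mathrm{d}\theta.
\end{equation}
Each summand $\log(|a_k^{(m)}|/R_m)$ is negative, so for every $m$
\begin{equation}
\log|\hat f(0)|\leq\frac{1}{2\pi}\int_0^{2\pi}\log\bigl|\hat f(R_m\e^{\ii\theta})\bigr|\,\mathrm{d}\theta.
\end{equation}

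It then remains to let $m\to\infty$. Since $\hat f$ is continuous on the compact set $\overline{D_1}$ it is bounded, say $|\hat f|\leq M$ with $M=\max_{\overline{D_1}}|\hat f|>0$, so $g_m(\theta):=\log M-\log|\hat f(R_m\e^{\ii\theta})|\geq0$ (and finite, since $R_m$ avoids the zeros). By continuity of $\hat f$ up to the boundary, $\hat f(R_m\e^{\ii\theta})\to\hat f(\e^{\ii\theta})$ for every $\theta$, whence $g_m(\theta)\to\log M-\log|\hat f(\e^{\ii\theta})|$ pointwise, with the value $+\infty$ at the (possible) zeros of $\hat f$ on the unit circle. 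Fatou's lemma yields
\begin{equation}
\int_0^{2\pi}\bigl(\log M-\log|\hat f(\e^{\ii\theta})|\bigr)\,\mathrm{d}\theta\leq\liminf_{m\to\infty}\int_0^{2\pi}\bigl(\log M-\log|\hat f(R_m\e^{\ii\theta})|\bigr)\,\mathrm{d}\theta,
\end{equation}
that is, $\int_0^{2\pi}\log|\hat f(\e^{\ii\theta})|\,\mathrm{d}\theta\geq\limsup_m\int_0^{2\pi}\log|\hat f(R_m\e^{\ii\theta})|\,\mathrm{d}\theta\geq2\pi\log|\hat f(0)|$, which is exactly~(\ref{eq:jensen}); as a by-product one sees that the boundary integral is bounded below, hence finite.

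The only genuinely delicate point is the interchange of limit and integral in the last step: the boundary values $\log|\hat f(\e^{\ii\theta})|$ need not be bounded below—they are $-\infty$ wherever $\hat f$ vanishes on the unit circle—so dominated convergence is unavailable. What saves the argument is the one-sided bound $\log|\hat f|\leq\log M$ coming from boundedness of $\hat f$, which makes $\log M-\log|\hat f|$ nonnegative and brings Fatou's lemma into play. (Equivalently, one may note that $\log|\hat f|$ is subharmonic in $D_1$ and invoke the sub-mean-value property over the circles $|\zeta|=R_m$, but establishing that subharmonicity itself rests on Jensen's formula, so the route above is the most economical.)
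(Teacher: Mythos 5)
Your proof is correct and follows essentially the same route as the paper's: Jensen's formula, with the zero terms discarded because each $\log(|a_k|/R)$ is negative. The only difference is that you work on disks of radius $R_m<1$ and pass to the boundary via Fatou's lemma (and treat $\hat f(0)=0$ separately), which carefully supplies the technical steps that the paper's terse appeal to Jensen's formula on $\overline{D_1}$ leaves implicit.
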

\begin{proof}
	Suppose that $\hat{f}(0)\neq0$. This property is a straightforward consequence of Jensen's formula:
\begin{equation}\label{eq:jensen2}
\log|\hat{f}(0)|=\sum_{j=1}^n\log|\zeta_j|+\frac{1}{2\pi}\int_0^{2\pi}\log|\hat{f}(\e^{\ii\theta})|\,\mathrm{d}\theta.
\end{equation}
Here $\zeta_1,\dots\zeta_n$ are the zeros of $\hat{f}(\zeta)$ in $D_1$. A complete proof of this formula can be found in many books of complex analysis (see e.g.~\cite{ahlfors1966complex}). Besides, the hypothesis $\hat{f}(0)\neq0$ can be relaxed: a further generalization of Jensen's formula may be shown to hold without that requirement.
\end{proof}

As a consequence, the value of the integral on the right-hand side is bounded from below, and therefore
\begin{equation}\label{eq:khalfin0}
\int_{0}^{2\pi}\log|\hat{f}(\e^{\ii\theta})|\,\mathrm{d}\theta>-\infty.
\end{equation}
Now we can come back to the half-plane and translate this inequality in terms of the original function $f$, which, by Eq.~(\ref{eq:ftoh}), can be reconstructed via
\begin{equation}
f(\tau)=\hat{f}\left(\hat{\zeta}(\tau)\right)\in\mathbb{C}.
\end{equation}
Via conformal mapping, the integral of $\hat{f}(\zeta)$ on the unit circle corresponds to the integral of $f(\tau)$ on the real line, up to a weight term coming from the Jacobian of the transformation: Eq.~(\ref{eq:khalfin0}) is therefore equivalent to
\begin{equation}
\int_{-\infty}^{\infty}\frac{\log|f(t)|}{1+t^2}\,\mathrm{d}t>-\infty,
\end{equation}
which is the logarithmic inequality~(\ref{eq:logarithmic1}).

Let us generalize this property to the vector-valued case. Analytic functions having values in a Hilbert space $\hilb$ are a straightforward generalization of the usual notion of $\mathbb{C}$-valued holomorphic functions. The standard theory of analytic functions generalizes immediately to the Hilbert-valued case without substantial differences. With little effort, it can be shown that a $\hilb$-valued function is analytic if and only if it is \textit{weakly} analytic (see e.g.~\cite{conway2019course}), i.e.~if, for all $\phi\in\hilb$, the function $z\mapsto\braket{\phi|f(z)}\in\mathbb{C}$ is analytic. This property allows us to easily identify $\hilb$-valued analytic functions.

To show the extension of the logarithmic inequality~(\ref{eq:logarithmic2}) to $\hilb$-valued functions $F:\mathbb{C}^-\cup\mathbb{R}\rightarrow\hilb$, analytic in $\mathbb{C}^-$, continuous on the real axis and bounded, we will follow a similar strategy. First of all, again we will use the conformal mappings~(\ref{eq:conformal})--(\ref{eq:conformal2}) between the lower half-plane and the unit disk to construct an analytic, $\hilb$-valued function on the real disk:
\begin{equation}
\hat{F}:D_1\rightarrow\mathbb{C},\qquad \hat{F}(\zeta)=F(\hat{\tau}(z))\in\hilb,
\end{equation}
which, again, is analytic inside the unit disk and continuous up to the circle. Proposition~\ref{prop:subharmonic} can be generalized to the vector-valued case:
\begin{proposition}\label{khalfinh}
	Let the vector-valued function 	$\zeta\in\overline{D_1}\mapsto \hat{F}(\zeta)\in\hilb$ be analytic in $D_1$ and continuous up to the boundary. Then the function $	\zeta\in\overline{D_1}\mapsto\log\|\hat{F}(\zeta)\|\in\mathbb{R}\cup\{-\infty\}
	$
	is subharmonic, i.e.\ it satisfies
	\begin{equation}\label{eq:jensen3}
	\log\|\hat{F}(0)\|\leq\frac{1}{2\pi}\int_0^{2\pi}\log\|\hat{F}(\e^{\ii\theta})\|\,\mathrm{d}\theta.
	\end{equation}
\end{proposition}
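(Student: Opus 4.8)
The plan is to reduce the vector-valued statement to the scalar one, Proposition~\ref{prop:subharmonic}, by testing $\hat F$ against a single, suitably chosen functional rather than by developing a genuine theory of vector-valued subharmonic functions. If $\hat F(0)=0$ the left-hand side of~(\ref{eq:jensen3}) equals $-\infty$ and there is nothing to prove, so I would assume $\hat F(0)\neq 0$ and set $\phi_0=\hat F(0)/\|\hat F(0)\|\in\hilb$, a unit vector.

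First I would introduce the scalar function $g(\zeta)=\braket{\phi_0|\hat F(\zeta)}$. By the weak-analyticity criterion recalled just above the statement, $g$ is analytic in $D_1$; since $\zeta\mapsto\hat F(\zeta)$ is continuous on the compact set $\overline{D_1}$, the function $g$ is continuous on $\overline{D_1}$ as well (and in particular bounded). Moreover $g(0)=\braket{\phi_0|\hat F(0)}=\|\hat F(0)\|\neq 0$, so $g$ is not identically zero, and Proposition~\ref{prop:subharmonic} applies to $g$ in the regime in which it is stated, without appealing to the generalized Jensen formula for functions vanishing at the origin. It then gives
\[
\log\|\hat F(0)\|=\log|g(0)|\leq\frac{1}{2\pi}\int_0^{2\pi}\log|g(\e^{\ii\theta})|\,\mathrm{d}\theta .
\]

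Next I would close the argument with the Cauchy--Schwarz inequality: for every $\theta$,
\[
|g(\e^{\ii\theta})|=|\braket{\phi_0|\hat F(\e^{\ii\theta})}|\leq\|\phi_0\|\,\|\hat F(\e^{\ii\theta})\|=\|\hat F(\e^{\ii\theta})\| ,
\]
and since $t\mapsto\log t$ is increasing this turns the previous inequality into
\[
\log\|\hat F(0)\|\leq\frac{1}{2\pi}\int_0^{2\pi}\log\|\hat F(\e^{\ii\theta})\|\,\mathrm{d}\theta ,
\]
which is exactly~(\ref{eq:jensen3}). The integrand $\theta\mapsto\log\|\hat F(\e^{\ii\theta})\|$ is measurable and bounded above, because $\|\hat F(\cdot)\|$ is continuous on the circle, so the integral is well-defined in $[-\infty,\infty)$; in fact the bound just derived together with the finiteness of $\log|g(0)|$ shows a posteriori that it is finite.

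I do not expect any genuinely hard step here: the only point requiring care is the choice of the test vector $\phi_0$, which must be made so that $g(0)\neq 0$. One could alternatively write $\log\|\hat F(\zeta)\|=\sup_{\|\phi\|=1}\log|\braket{\phi|\hat F(\zeta)}|$ and invoke the fact that a supremum of subharmonic functions is subharmonic once its upper semicontinuity is controlled; the single-functional argument above avoids that technicality altogether, and is the route I would take.
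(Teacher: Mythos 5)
Your proof is correct and follows essentially the same route as the paper's: both reduce to the scalar Proposition~\ref{prop:subharmonic} by pairing $\hat F$ with a fixed unit vector and then bounding $|\braket{\phi|\hat F(\e^{\ii\theta})}|\leq\|\hat F(\e^{\ii\theta})\|$ on the boundary. The only difference is that you take the exact maximizer $\phi_0=\hat F(0)/\|\hat F(0)\|$ at the centre (treating $\hat F(0)=0$ as a trivial case), which lets you bypass the paper's $\epsilon$-approximation through the sup representation $\|\hat F(\zeta)\|=\sup_{\|\phi\|=1}|\braket{\phi|\hat F(\zeta)}|$ and has the minor bonus that the scalar proposition is invoked only in the regime $g(0)\neq 0$ for which its proof is actually given.
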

\begin{proof}
	Since $\hat{F}$ is analytic, for every $\phi\in\hilb$ the complex-valued function $\zeta\in\overline{D_1}\mapsto\braket{\phi|\hat{F}(\zeta)}\in\mathbb{C}$ is analytic in $D_1$ and continuous up to the unit circle, therefore, by Proposition~\ref{prop:subharmonic}, it satisfies
	\begin{equation}
	\log|\braket{\phi|\hat{F}(0)}|\leq\frac{1}{2\pi}\int_0^{2\pi}\log|\braket{\phi|\hat{F}(\e^{\ii\theta})}|\,\mathrm{d}\theta.
	\end{equation}
	On the other hand, by the Riesz-Fr\'echet representation theorem one has	\begin{equation}
	\|\hat{F}(\zeta)\|=\sup_{\|\phi\|=1}|\braket{\phi|\hat{F}(\zeta)}| \ ,
	\end{equation}
	thus, also using the monotonicity of the logarithm,
	\begin{equation}
	\log\|\hat{F}(\zeta)\|=\log\sup_{\|\phi\|=1}|\braket{\phi|\hat{F}(\zeta)}|=\sup_{\|\phi\|=1}\log|\braket{\phi|\hat{F}(\zeta)}| \ .
	\end{equation}
	This means that, for all $\epsilon>0$, there is $\phi_{\epsilon}\in\hilb$ with $\|\phi_{\epsilon}\|=1$ such that
	\begin{equation}
	\log\|\hat{F}(0)\|\leq\log|\braket{\phi_{\epsilon}|\hat{F}(0)}| +\epsilon,
	\end{equation}
	and therefore
	\begin{eqnarray}
	\log\|\hat{F}(0)\|&\leq&\epsilon+\frac{1}{2\pi}\int_0^{2\pi}\log|\braket{\phi_{\epsilon}|\hat{F}(\e^{\ii\theta})}|\,\mathrm{d}\theta\nonumber\\
	&\leq&\epsilon+\frac{1}{2\pi}\int_0^{2\pi}\log\|\hat{F}(\e^{\ii\theta})\|\,\mathrm{d}\theta\nonumber.
	\end{eqnarray}
	Since $\epsilon$ is arbitrary, Eq.~(\ref{eq:jensen3}) follows.
\end{proof}
Again, this property implies that the integral in the right-hand side of Eq.~(\ref{eq:jensen3}) is bounded from below, and therefore cannot diverge to $-\infty$. Transferring back the problem from the unit disk to the lower half-plane, this finally proves the following lemma which is the sought result, see Eq.~(\ref{eq:logarithmic2}):
\begin{lemma}
\label{lm:genPW}
Let $F:\mathbb{C}^-\cup\mathbb{R}\rightarrow\hilb$ be a nonzero vector-valued function analytic in $\mathbb{C}^-$, continuous on the real axis, and bounded. Then	
\begin{equation}
\int_{-\infty}^{\infty}\frac{\log\|F(t)\|}{1+t^2}\,\mathrm{d}t>-\infty.
\end{equation}
\end{lemma}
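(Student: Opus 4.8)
The plan is to reduce the vector-valued statement to the scalar logarithmic inequality~(\ref{eq:logarithmic1}) by passing to the unit disk via the conformal maps~(\ref{eq:conformal})--(\ref{eq:conformal2}), invoking the subharmonicity of $\log\|\hat F\|$ established in Proposition~\ref{khalfinh}, and then transferring the resulting boundedness of the circle integral back to the real line.

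First I would set $\hat F(\zeta)=F(\hat\tau(\zeta))$ for $\zeta\in D_1$. Since $\hat\tau$ is a conformal bijection of $D_1$ onto $\mathbb{C}^-$ and $F$ is analytic in $\mathbb{C}^-$, the composition $\hat F$ is analytic in $D_1$; the continuity of $F$ up to $\mathbb{R}$ together with its boundedness (which handles the point $\zeta=-1$ corresponding to $\tau=\infty$) shows $\hat F$ extends continuously to $\overline{D_1}$. Because $F$ is nonzero, so is $\hat F$. By Proposition~\ref{khalfinh}, $\log\|\hat F(0)\|\le\frac{1}{2\pi}\int_0^{2\pi}\log\|\hat F(\e^{\ii\theta})\|\,\mathrm{d}\theta$; since $F$, hence $\hat F$, is bounded, $\log\|\hat F(0)\|$ is a finite real number (if $\hat F(0)=0$ one first divides out the zero, or equivalently chooses a different base point, but the simplest route is to invoke the version of Proposition~\ref{khalfinh} without the $\hat F(0)\ne0$ restriction). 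Therefore $\int_0^{2\pi}\log\|\hat F(\e^{\ii\theta})\|\,\mathrm{d}\theta>-\infty$.

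Next I would change variables from the circle to the real line. Writing $\e^{\ii\theta}=\hat\zeta(t)=\frac{\ii+t}{\ii-t}$ for $t\in\mathbb{R}$, a direct computation gives $\mathrm{d}\theta=\frac{2}{1+t^2}\,\mathrm{d}t$, so that
\begin{equation}
\int_0^{2\pi}\log\|\hat F(\e^{\ii\theta})\|\,\mathrm{d}\theta=2\int_{-\infty}^{\infty}\frac{\log\|F(t)\|}{1+t^2}\,\mathrm{d}t,
\end{equation}
using $\hat F(\hat\zeta(t))=F(t)$. Combining this identity with the lower bound from the previous step yields $\int_{-\infty}^{\infty}\frac{\log\|F(t)\|}{1+t^2}\,\mathrm{d}t>-\infty$, which is the claim.

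The only genuinely delicate point is establishing that $\log\|\hat F\|$ is integrable on the boundary circle \emph{from below}—i.e.\ that it cannot be $-\infty$ on a set of positive measure in a way that defeats the Jensen-type inequality—and this is precisely what Proposition~\ref{khalfinh} already delivers, by reducing to the scalar Jensen formula applied to $\zeta\mapsto\braket{\phi_\epsilon|\hat F(\zeta)}$ and taking a supremum over unit vectors $\phi$. Granting that proposition, the remaining steps are the routine verification that the conformal map transports analyticity, continuity, and boundedness correctly (so that $\hat F$ meets the hypotheses of Proposition~\ref{khalfinh}), and the elementary Jacobian computation $\mathrm{d}\theta=\frac{2}{1+t^2}\,\mathrm{d}t$; neither presents a real obstacle.
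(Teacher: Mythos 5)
Your proposal is correct and follows essentially the same route as the paper: conformal transfer to the unit disk, the vector-valued Jensen-type inequality of Proposition~\ref{khalfinh}, and the Jacobian computation $\mathrm{d}\theta=\frac{2}{1+t^2}\,\mathrm{d}t$ (which the paper leaves implicit) to return to the real line. Your parenthetical handling of the case $\hat F(0)=0$ (dividing out the zero or moving the base point) matches the level of detail at which the paper itself relaxes the $\hat f(0)\neq 0$ hypothesis, so there is no substantive gap.
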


\small

\end{document}